\newcommand{\appendixref}[1]{\hyperref[#1]{Appendix~\ref*{#1}}}
\DeclareMathOperator{\E}{\mathbb{E}}     
\DeclareMathOperator*{\argmax}{arg\,max}
\newcommand{\DB}{\textbf{DB}}
\newcommand{\stand}{\mathcal{S}}
\newcommand{\numcands}{k}
\newcommand{\altorder}{alt-order}
\title{Efficient Weighting Schemes for Auditing Instant-Runoff
Voting Elections\thanks{Authors listed alphabetically.  To appear in the 9th
Workshop on Advances in Secure Electronic Voting (Voting'24), on 8 March
2024.}}
\titlerunning{Efficient Weighting Schemes for Auditing IRV Elections}
\author{
Alexander Ek       \inst{1}  \orcidID{0000-0002-8744-4805}  \and
Philip B. Stark    \inst{2}  \orcidID{0000-0002-3771-9604}  \and
Peter J. Stuckey   \inst{3}  \orcidID{0000-0003-2186-0459}  \and
Damjan Vukcevic    \inst{1}  \orcidID{0000-0001-7780-9586}
}
\authorrunning{Ek A, Stark PB, Stuckey PJ, Vukcevic D}
\institute{
Department of Econometrics and Business Statistics, Monash University, Clayton,
Australia
\and
Department of Statistics, University of California, Berkeley, CA, USA
\and
Department of Data Science and AI, Monash University, Clayton, Australia \\
\email{damjan.vukcevic@monash.edu}
}
\begin{document}

\maketitle

\begin{abstract}
Various risk-limiting audit (RLA) methods have been developed for
instant-runoff voting (IRV) elections.  A recent method, AWAIRE, is the
first efficient approach that can take advantage of but does not require cast
vote records (CVRs).  AWAIRE involves adaptively weighted averages of test
statistics, essentially ``learning'' an effective set of hypotheses to test.
However, the initial paper on AWAIRE only examined a few weighting schemes and
parameter settings.

We explore schemes and settings more extensively, to identify and recommend
efficient choices for practice.  We focus on the case where CVRs are not
available, assessing performance using simulations based on real election data.

The most effective schemes are often those that place most or all of the weight
on the apparent ``best'' hypotheses based on already seen data.  Conversely,
the optimal tuning parameters tended to vary based on the election margin.
Nonetheless, we quantify the performance trade-offs for different choices
across varying election margins, aiding in selecting the most desirable
trade-off if a default option is needed.

A limitation of the current AWAIRE implementation is its restriction to a small
number of candidates---up to six in previous implementations.  One path to a
more computationally efficient implementation would be to use lazy evaluation
and avoid considering all possible hypotheses.  Our findings suggest that such
an approach could be done without substantially compromising statistical
performance.
\end{abstract}


\section{Introduction}

Elections are crucial to democracy.  Ensuring that elections truly reflect the
preferences of the population should be a cornerstone of democratic governance.
While there are many forms of elections, \emph{ranked-choice} or
\emph{preferential} voting allows voters to express preferences among some or
all candidates, rather than simply voting for a single candidate.
Instant-runoff voting (IRV) is used in elections in many countries, including
Australia, Ireland, and the USA.

While ranked-choice voting captures more of the preferences of voters, assuring
that their preferences are followed requires ensuring that the reported
\emph{outcome} of an election is correct, that is the reported winner of the
election is the winner if we followed the election process correctly on the
correct set of ballots.  Risk-limiting audits (RLAs) are a way of checking that
a reported election outcome is correct. As opposed to other auditing methods,
RLAs guarantee with some minimum probability that they will correct an
incorrect reported outcome of an election, and never alter a correct outcome.
The \emph{risk limit}, denoted by $\alpha$, is the maximum chance that a wrong
outcome will not be corrected.

The first RLA approach to auditing IRV elections, RAIRE~\cite{blom2019raire},
makes use of a digitised record of the votes in the election (the \emph{cast
vote records} (CVRs)) to generate a set of ``assertions'' that, if true,  imply
that the reported winner really won.  These assertions are currently used in
the SHANGRLA framework for RLAs \cite{shangrla}, and have been used to audit
actual elections \cite{blomEtal20}.  More recently, an alternative approach to
RLAs for IRV elections that does not require CVRs, AWAIRE~\cite{ek2023awaire},
was published.  AWAIRE has the advantage that many IRV elections are tabulated
by hand,\footnote{%
Most, but not all, lower house elections in Australia are hand-counted IRV
contests.}
and no digitised record of the ballots is actually made, so RAIRE is not
applicable in these circumstances.  While RAIRE commits to a set of assertions
to check before the audit starts, AWAIRE adapts to the voter preferences
observed in the audit sample as the audit progresses, identifying a sufficient
set of assertions that are efficient to test statistically.  AWAIRE is also
more resilient than RAIRE when the reported outcome is correct but the
digitised vote records lead to an erroneous elimination order.

AWAIRE uses a \emph{weighting scheme} to adapt the assertions it will
concentrate on as the audit progresses, and more and more observations of
ballots are seen. In the original AWAIRE paper~\cite{ek2023awaire}, the authors
consider a few simple weighting schemes and a single default choice of
parameters used for ALPHA \cite{stark2023alpha}, the statistical test used to
test whether assertions are correct (within a statistical limit on the
acceptable chance of error).  In this paper we:
\begin{itemize}
\item Expand upon AWAIRE  by investigating more weighting schemes and exploring
    how the margin of victory affects which weighting scheme is best.
\item Investigate the effect of ALPHA tuning parameters on audit efficiency.
\end{itemize}


\section{Auditing IRV Contests Using AWAIRE}

\subsection{Instant-runoff voting (IRV)}

In an IRV contest, voters write on their ballot an ordering of (possibly a
subset of) the candidates based on the voter's preference.

The votes are tabulated as follows: Initially, each ballot counts as a single
vote for its first-choice candidate on that ballot.  The candidate with the
fewest first-choice votes is eliminated, while the others remain in the race.
Every ballot that ranked the eliminated candidate first is now instead counted
as a vote for its second choice, i.e., it becomes a vote for the top-ranked
candidate remaining in the race.  This process continues until only one
candidate remains, the winner.  As a ballot need not list every candidate, if
at any point there are only eliminated candidates listed on a ballot, then the
ballot is \emph{exhausted} and no longer contributes any votes.  The above
tabulating process leads to an \emph{elimination order}: the order in which
candidates are eliminated, with the last candidate in the order being the
winner.

In order to audit an IRV election we need to show that it would be unlikely
that any candidate other than the reported winner actually won.

\subsection{The AWAIRE Framework}

AWAIRE is an RLA method for IRV elections that does not require an electronic
record of the votes on each ballot (CVRs) to proceed.
In brief:
\begin{itemize}
\item AWAIRE tracks every elimination order that yields a winner other than the
    reported winner; we refer to these orders as \emph{\altorder(s)}.  If there
    is sufficiently strong evidence (based on a pre-specified risk limit) that
    no \altorder{} is correct, then the audit stops without a full hand count
    and AWAIRE concludes that the reported winner really won.
\item Each \altorder{} is characterised by a set of \emph{requirements}:
    necessary conditions for that elimination order to be correct.  If the data
    refutes at least one requirement for each \altorder, then the reported
    outcome is confirmed.
\item A \emph{test supermartingale} is constructed for each requirement.
    A test supermartingale is also constructed for each \altorder{}, by
    defining each new term as a (predictable) convex combination of the terms
    in the test supermartingales for each requirement in the \altorder{}.
\item As the audit progresses, the convex combination for each \altorder{} is
    updated to give more weight to the test supermartingales that are giving
    the strongest evidence that their corresponding requirements are false.
\item The audit stops when the test supermartingale for every \altorder{}
    exceeds $1 / \alpha$, or when every ballot has been inspected and the
    correct outcome is known.
\item The process described above has risk limit $\alpha$.
\end{itemize}

\subsection{Test Supermartingales}

A supermartingale is a mathematical model of a gambler's fortune in a sequence
of wagers that are fair or biased against the gambler.  Specifically, a
supermartingale is a stochastic process $(M_t)_{t \in \mathbb{N}}$ (e.g.,
fortune after $t$ bets) with respect to another stochastic process $(X_t)_{t
\in \mathbb{N}}$ (e.g., a series of $t$ coin flips that we bet on), where the
conditional expected value of the next observation, given all past
observations, is not greater than the current  observation; that is, $\E(M_t
\mid X_1, \dots, X_{t-1}) \leqslant M_{t-1}$.

A test statistic that is a nonnegative supermartingale starting at 1 when a
hypothesis is true can be used to test that hypothesis.  We call such a process
a \emph{test supermartingale} for the hypothesis.  By Ville's inequality
\cite{ville39}, which generalises Markov's inequality to nonnegative
supermartingales, the chance that a test supermartingale ever exceeds
$1/\alpha$ is at most $\alpha$ if the null hypothesis is true.  Hence,
rejecting the null hypothesis when $M_t \geqslant1/\alpha$ for some time $t$ is
a level $\alpha$ test of the hypothesis.

In words, suppose that a gambler starts with a fortune of \$1 and is not
allowed to go into debt.  The gambler bets on a sequence of games.  The chance
that the gambler's fortune ever gets to \$$1/\alpha$ is at most $\alpha$ if the
games are fair or biased against the gambler.  If the gambler succeeds in
amassing a fortune of, say, \$1,000, then that is quite strong evidence that
some of the games had odds that were favorable to the gambler---that the games
were not all fair or sub-fair.  Had the games all been fair or sub-fair, the
chance of reaching a fortune of \$1,000 would be at most $1/1000 = 0.001$.

\subsection{Hypotheses and Requirements}

The process of auditing an IRV contest can be expressed as a collection of
hypothesis tests.  In the AWAIRE framework, we try to reject each \altorder{}.
For an election with $\numcands$ candidates, there are $m = \numcands! -
(\numcands - 1)!$ \altorder{s}.  Let $H_0^j$ denote the hypothesis that the
$j$th \altorder{} is correct.  Then, to conclude that the reported winner
really won without the audit becoming a full hand count, we need to reject the
composite null hypothesis
\[
    H_0 = H_0^1 \cup \dots \cup H_0^m.
\]
To reject an \altorder{} in the AWAIRE framework, we need to reject one or more
of its \emph{requirements}, relations that must hold for the \altorder{} to
hold (i.e., they are necessary and sufficient for \altorder{} $i$ to be
correct).  Hence, if we can reject one requirement with risk $\alpha$, then we
can reject the \altorder{} with risk $\alpha$.  We must reject
\[
    H_0^i = R_i^1 \cap R_i^2 \cap \dots \cap R_i^{r_i},
\]
where $R_i^1, R_i^2, \dots, R_i^{r_i}$ are the requirements of \altorder{} $i$.

In IRV, each requirement is comprised of so-called \emph{directly beats}
assertions.  The assertion $\DB(i, j, \stand)$, where $\stand \supseteq \{i,
j\}$, holds if candidate~$i$ has more votes than candidate~$j$, given that only
the candidates in $\stand \supseteq \{i, j\}$ have not been eliminated.  If the
assertion is true, then it means that $j$ cannot be the next eliminated
candidate (as $j$ would be eliminated before $i$) if only the candidates
$\stand$ remain standing. For more details about the assertions and how to
build test supermartingales for individual assertions, we refer the reader
to~\cite{ek2023awaire}.\footnote{%
An understanding of these details is not necessary for the current paper.}

At each time $t$, a ballot is drawn without replacement.  Every ballot is
encoded (via an \emph{assorter}, see~\cite{shangrla}) as either evidence
against (value $1$), for (value $0$), or neutral to (value $1/2$) a requirement
being true.  Each requirement can be expressed as the hypothesis that the mean
of a list of encoded ballots is less than $1/2$.  We test the requirement using
the ALPHA test supermartingale~\cite{stark2023alpha}.

ALPHA involves specifying a function that can be thought of as a running
estimate of the population of assorter.  One such function,
\texttt{shrinkTrunc()}, has two tuning parameters, $\eta_0$, which can be
thought of as an initial estimate of the true assorter mean for the ballots,
and $d$, which can be thought of as how much emphasis we put on $\eta_0$
(higher values) or how eagerly we learn from the sample (lower values).  In
this paper we explore the effect of those parameters on audit sample sizes.
Other parameters of ALPHA were set to the same values used
by~\cite{ek2023awaire}.


\section{Weighting Schemes}

In the AWAIRE framework, the assorter associated with each requirement $r$ from
some \altorder{}\footnote{%
The details in this section are analogous for every \altorder{}.}
is applied to the sample (at time $\ell$), forming the list of values
$(X^r_t)_{t=1}^\ell$.  Let $M_{r,\ell}$ be the test supermartingale for
requirement $r$ evaluated at time $\ell$, which can be written as a product of
increments:
\[
    M_{r,\ell} := \prod_{t=0}^{\ell} m_{r,t},
\]
where $m_{r,0} = 1$ denotes the starting value and $m_{r,t}$ reflects how
$X^r_t$ impacts the cumulative evidence that requirement $r$ is false.
For example $m_{r,t} < 1$ means that $X^r_t$ gives no evidence that $r$ is
false; $m_{r,t} > 1$ means that $X^r_t$ gives evidence that $r$ is false.
Because $M_{r,\ell}$ is a test supermartingale,
\begin{equation}
\label{eq:e-expect}
    \mathbb{E}(m_{r,t} \mid (X^r_\ell)_{\ell=0}^{t-1}) \leqslant 1,
\end{equation}
if $r$ is true.
These supermartingales are referred to as \emph{base} test supermartingales.

\subsection{Intersection Test Martingales}

Each \altorder{} has an \emph{intersection} test supermartingale, which
measures the cumulative evidence for that \altorder{} being the true
elimination order.  To correct for multiplicity, the intersection test
supermartingale uses a weighted average of its base test supermartingales.

Specifically,  let the weights at time $t$ be $\{w_{r,t}\}_{r=1}^{r_i}$.  These
can depend on the data collected up to time $t - 1$, but not on any later data.
Using those weights, the intersection test supermartingale is a product of
weighted combinations of the terms of the base test supermartingales:
\[
    M_\ell := \prod_{t=1}^\ell \frac{\sum_{r=1}^{r_i} w_{r,t} m_{r,t}}
                                    {\sum_{r=1}^{r_i} w_{r,t}},
    \quad \ell = 0, 1, \dots,
\]
with $M_0 := 1$.

The base test supermartingales for requirements that are false tend to grow
with $t$.  We explore how to make $M_\ell$ grow quickly by choosing efficient
weighting schemes.

\subsection{Previous Schemes}

The original AWAIRE paper~\cite{ek2023awaire} investigates a number of schemes
for weight selection:

\begin{description}
\item[Linear.]  Proportional to previous value, $w_{r,t} := M_{r,t-1}$.
\item[Quadratic.]  Proportional to the square of the previous value,
                                                $w_{r,t} := M_{r,t-1}^2$.
\item[Largest.]  Take only the largest base supermartingale(s) and ignore the
    rest, $w_{r,t} := 1$ if $r \in \argmax_{r'} M_{r',t-1}$; otherwise,
    $w_{r,t} := 0$.
\end{description}
Experiments in~\cite{ek2023awaire} found \textbf{Largest} to be the most robust
choice.  But there are many more weighting schemes possible, and indeed a
single weighting scheme may not be the best for different IRV elections.

\subsection{New Schemes: Variants of Previous Schemes}

We introduce several new weighting schemes usable within AWAIRE to try to
generate intersection test supermartingale that grow quickly.  The schemes in
the previous subsection are \emph{myopic}: they only look at the previous value
of the base supermartingales.  This makes them inefficient when two or more
base test supermartingales frequently swap leadership positions.  Below we
examine more complex weighting schemes, many of which look back at the test
supermartingale values of the last $i$ steps:
\begin{description}
\item[LargestCount($i$)] Put all weight on the base test supermartingale that
    was largest most often in the previous $i$ draws.  This is a less myopic
    version of \textbf{Largest}.
\item[LargestMean($i$)] Put all weight on the base test supermartingale whose
    mean in the last $i$ draws was largest.  Again this is a less myopic
    version of \textbf{Largest} that also takes into account the magnitude
    of the difference between different requirements.
\item[Linear$+$] Same as Linear, but if at least one base test supermartingale
    is greater than 1 at step $t-1$, put weight $0$ on all base test
    supermartingales that are less than 1.  This attempts to remove from
    consideration requirements that appear to be compatible with the data.
\item[LinearCount($i$)] Put weight in linear proportion to how many times each
    requirement has been the largest looking back $i$ steps.  This is fairer
    version of \textbf{LargestCount} that spreads its bets on requirements
    that have been largest.
\item[LinearMean($i$)] Taking the moving average value of each requirement
    looking back $i$ steps, put weight in linear proportion to their means.
    This is a less myopic version of \textbf{Linear}.
\item[Quadratic$+$] Same as Quadratic, adapted in an analogous way to
    Linear$+$.
\end{description}

\subsection{New Schemes: Portfolio Algorithms}
\label{sec:portfolio-algorithms}

There is a large literature on \emph{portfolio algorithms}, which aim to
maximise the growth of wealth in a stock market by selecting a portfolio of
stocks.  This involves selecting how to split some starting capital into
amounts to invest in each stock and how to re-invest the capital each period.
Our weighting schemes fit this paradigm, with the base test supermartingales
representing individual stocks and the weights corresponding to the fraction of
the current fortune invested in each stock in each time period.  Any portfolio
algorithm that only uses information about previous stock prices yields a
weighting scheme that could be used with AWAIRE.

We attempted to test a variety of portfolio algorithms, but the vast majority
of papers describing such algorithms do not include software.  The most
comprehensive collection of software we found was at:
\begin{center}
    \url{https://github.com/Marigold/universal-portfolios}
\end{center}
We tried to use these, but the only scheme that ran successfully was:
\begin{description}
\item[ONS($\delta$)] ``Online Newton Step'' with tuning parameter
    $\delta$~\cite{agarwal2006}.  This is a family of weighting schemes coming
    from investment portfolio management.
\end{description}
The other algorithms either did not apply to our problem or crashed due to
floating point overflow.  Even ONS($\delta$) sometimes crashed for
$\delta = 0.66, 1$, and sometimes $2$.  Thus, we analyse ONS with $\delta > 2$.

Previous work has shown that (under suitable conditions) the optimal portfolio
is a \emph{constant rebalanced portfolio} \cite{algoet1988,breiman1961optimal},
where at each timestep the fraction of the current capital invested in a given
stock is constant over time.  The optimal allocation, however, can only be
determined in hindsight.

A class of portfolio algorithms that are asymptotically optimal are
\emph{$F$-weighted portfolios}, also known as \emph{universal portfolios}
\cite{cover1991}.  However, they often perform poorly for small sample sizes
and do not necessarily have computationally efficient implementations
\cite{kalai2002efficient}.  Nevertheless, they might inspire better weighting
schemes; we discuss some ideas in \autoref{sec:ideas-for-better-schemes}.

\subsection{Software}

Our software implementation of the new weighting schemes, along with the AWAIRE
implementation, is available at:
\url{https://github.com/aekh/awaire}


\section{Analyses and Results}

\subsection{Data}

We used the same NSW 2015 Legislative Assembly election data as in the AWAIRE
paper \cite{ek2023awaire}, consisting of 71~contests with 6 or fewer
candidates.\footnote{%
\url{https://github.com/michelleblom/margin-irv}}
Experiments showed that the relative performance of the weighting schemes and
various tuning parameters for ALPHA depend on the margin of victory.  To
understand these differences more clearly, we partitioned the dataset into four
categories based on the margin of victory:
\begin{description}
\item[Huge.] Margins of 10\% and above. (41 contests)
\item[Large.] Margins in the range 4--10\%. (19 contests)
\item[Medium.] Margins in the range 1.5--4\%. (7 contests)
\item[Small.] Margins less than 1.5\%. (4 contests)
\end{description}

\subsection{Initial Comparison of Weighting Schemes}

First, let's compare the weighting schemes we have listed above.
We will use the $d=50$ and $\eta_0=0.52$ as was used in previous AWAIRE paper.
We will refer to this as \emph{the previous default}.
We used a risk limit of $5\%$.
See \autoref{fig:scheme-compare} for results.
For each weighting scheme, we ran 500 simulated audits for each contest.
First we calculated statistics across all simulated audits in each of the four
margin categories.

\begin{figure}[ptb]
\centering\includegraphics[width=0.99\textwidth]{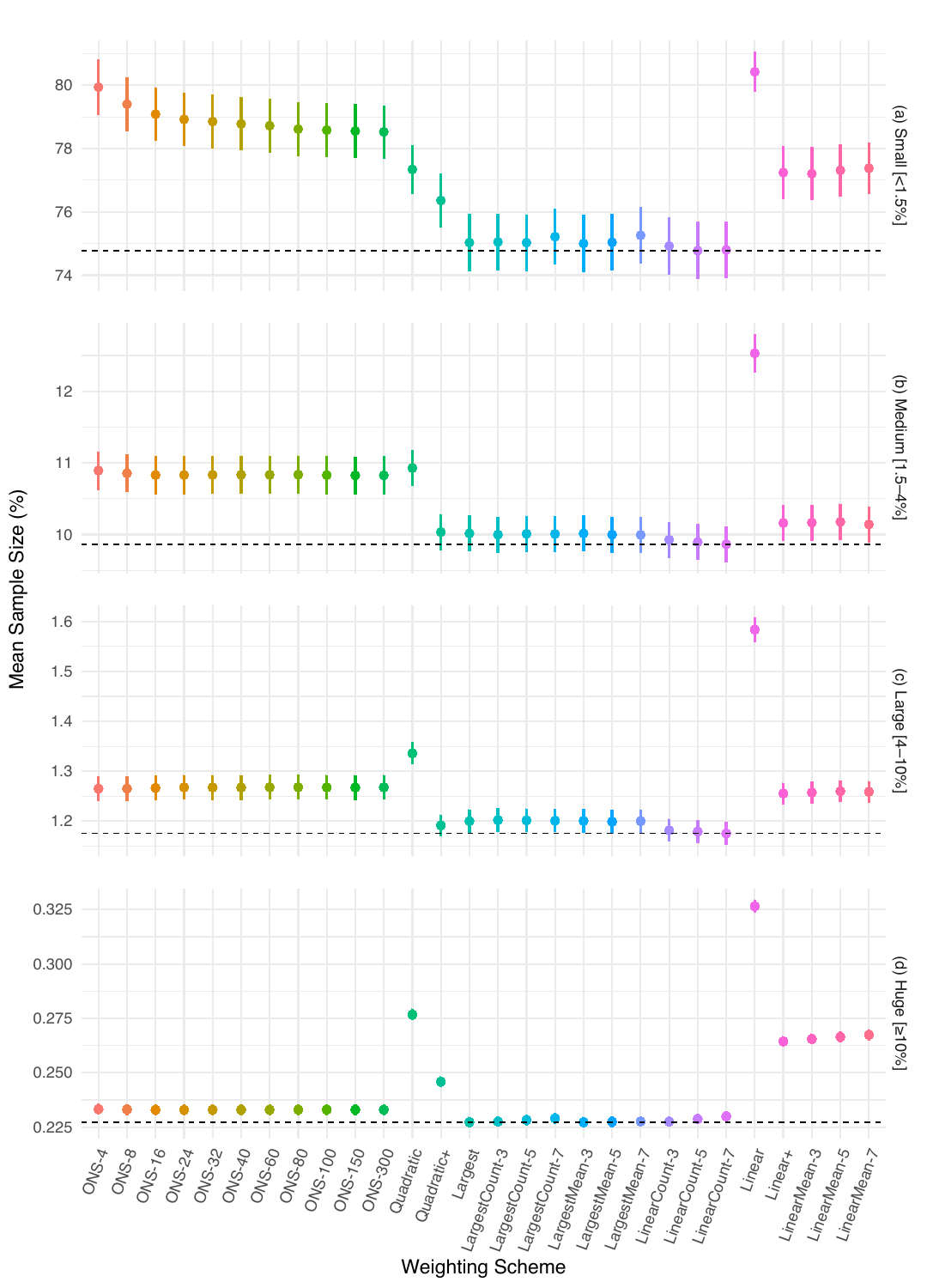}
\caption{Mean sample size (as a percentage of the total ballots in each
    contest; $\pm 2$ standard errors) across all simulated audits in each of
    the margin categories (rows). The vertical gridlines in panels (a)--(d)
    correspond respectively to approximately 500, 150, 25 and 10 ballots.  The
    dashed lines show the best mean sample size achieved within each panel.}
\label{fig:scheme-compare}
\end{figure}

\autoref{fig:scheme-compare} indicates that Quadratic$+$, LargestCount(5),
LinearCount(7), and the previously introduced Largest are consistently best
while also performing somewhat differently across the four categories.  The
following sections concentrate on those weighting schemes.

\subsection{Tuning Parameters for \texttt{shrinkTrunc()} in ALPHA}
\label{sec:tuning-parameters}

\begin{figure}[ptb]
\centering\includegraphics[width=\textwidth]{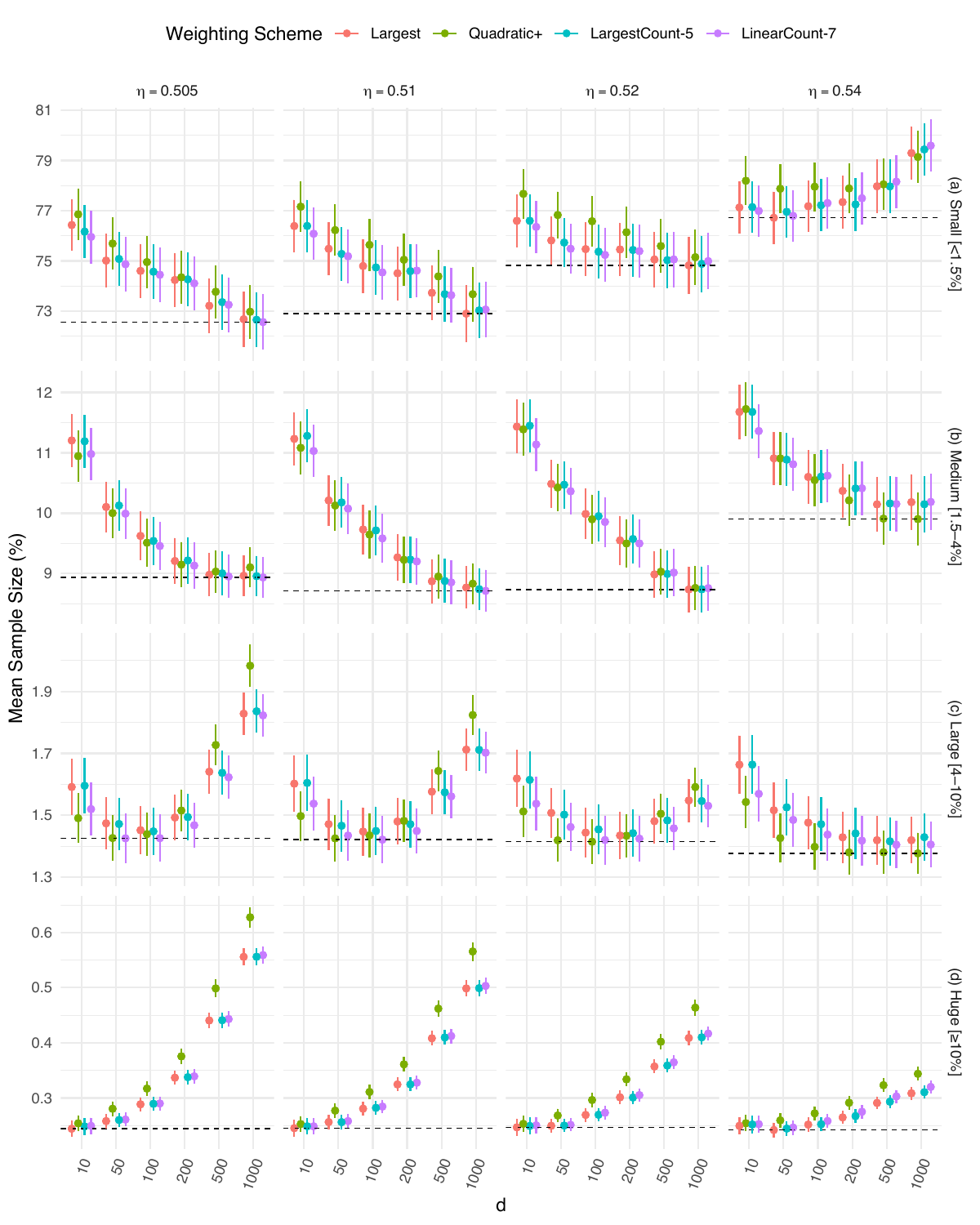}
\caption{Mean sample size (as a percentage of the total ballots in each
    contest; $\pm 2$ standard errors) across all simulated audits in each of
    the margin categories (rows) and settings for \texttt{shrinkTrunc()}
    ($\eta_0$ across columns and $d$ on the x-axis).  Three contests were
    selected to represent each category, see \autoref{sec:tuning-parameters}
    for details.  The dashed lines show the best mean sample size achieved
    within each panel.}
\label{fig:alpha-compare}
\end{figure}

The purpose of these experiments is twofold: first, to understand what the best
tuning parameters are for ALPHA when dealing with IRV contests; second, to
ensure that the evaluation of the weighting schemes is somewhat decoupled from
the choice of underlying test supermartingale.

We used $\eta_0 \in \{0.505, 0.51, 0.52, 0.54\}$ and $d \in \{10, 50, 100, 200,
500, 1000\}$.  For time reasons, for these experiments we used a subset of the
contests consisting of 3 elections per margin category: (a)~the contest with
the smallest margin, (b)~the contest with the largest margin, and (c)~a contest
in the middle (rounded to smaller margin if no true middle).

\autoref{fig:alpha-compare} shows the results from these experiments.  There
was no single best choice of tuning parameters, but $\eta_0 = 0.51$ and $d =
200$ were reasonable defaults.  Selecting $\eta_0$ and $d$ involves trade-offs
in performance across contests.  For example, with $\eta_0 = 0.51$, increasing
$d$ improved efficiency for Small-margin elections but decreased efficiency for
the Huge-margin category.  The default choice balances efficiency across the
categories by slightly prioritising good performance for Large and Medium at
the expense of Small and Huge.  Our reasoning is as follows:
\begin{itemize}
\item Audits for Huge-margin contests will generally only need small sample
    sizes, thus increasing the number of samples by a relatively large
    percentage has low absolute cost.
\item Audits of contests with very small margins may require sampling fractions
    so large that a full hand count is more efficient.
\end{itemize}

\subsection{Detailed Comparison of Selected Weighting Schemes}

From the results in \autoref{fig:alpha-compare}, we see two types of patterns:
either the difference between the weighting schemes is barely discernable, or
Quadratic$+$ differs clearly from the others (performing either better or
worse).  Since the other three methods performed so similarly, we recommend
using Largest because of its simplicity (it only requires storing values from 1
draw in the past).  Therefore, we selected only Quadratic$+$ and Largest for
further comparisons.

In this section, we will compare their performance with $\eta_0 = 0.51$ and
$d = 200$ (as selected in \autoref{sec:tuning-parameters}) against Largest with
$\eta_0 = 0.52$ and $d = 50$ (the default in \cite{ek2023awaire}).  We used all
contests with 6 or fewer candidates for the comparison.

\autoref{fig:new-vs-old} shows the average reduction in the mean sample size,
plotted against the margin of each contest.  The Largest and Quadratic$+$
schemes both perform similarly.  There is a substantial reduction in sampling
effort for elections with small-to-medium margins, and a very slight increase
for large-to-huge margins.  \autoref{fig:largest-vs-quadraticpositive} compares
the average reduction in mean sample size for the two new default choices in
more detail.  For the majority of contests, the Largest scheme is slightly
better than Quadratic$+$.

\begin{figure}[ptb]
\centering\includegraphics[width=\textwidth]{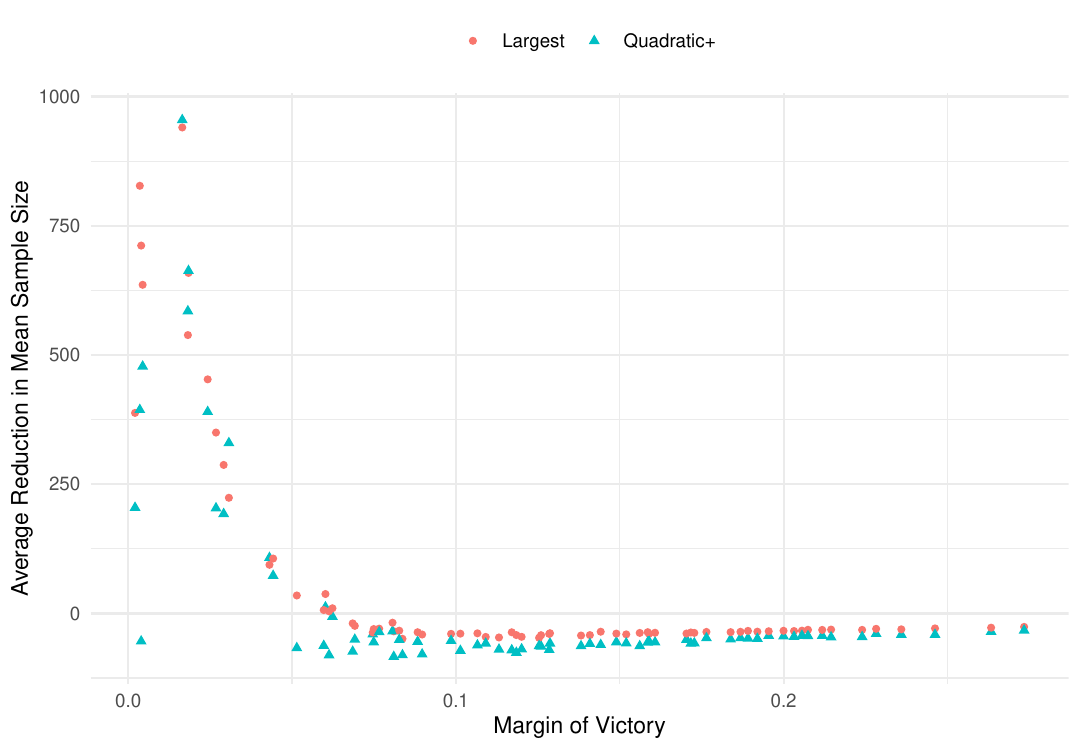}
\caption{Average reduction in mean sample size for two default choices compared
    to the previous default choice. Each point represents a single contest
    (averaged over 500 simulated audits). The margin (x-axis) is shown as a
    proportion out the total ballots in each contest.}
\label{fig:new-vs-old}
\end{figure}

\begin{figure}[ht]
\centering\includegraphics[width=0.8\textwidth]{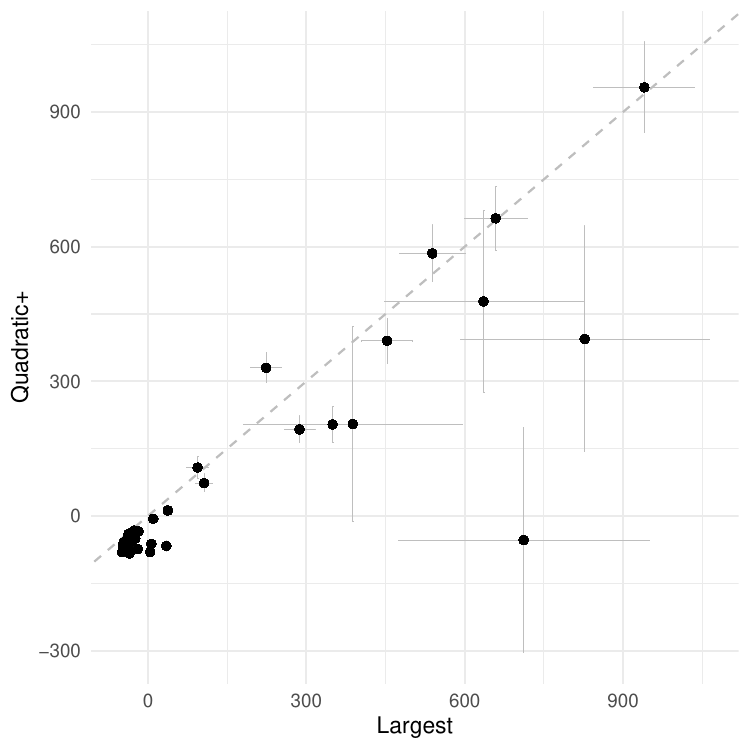}
\caption{Average reduction in mean sample size for our two default choices; now
    $\pm 1$ standard error in both directions.  Each point represents a single
    contest (averaged over 500 simulated audits). The majority of points are on
    the right-hand side of the diagonal, indicating a larger average reduction
    when using Largest as compared to Quadratic$+$.}
\label{fig:largest-vs-quadraticpositive}
\end{figure}


\section{Improving Weighting Schemes Using More Sophisticated Portfolio
Approaches}
\label{sec:ideas-for-better-schemes}

As discussed earlier in \autoref{sec:portfolio-algorithms}, previous
theoretical work has shown that the optimal weighting scheme will be a constant
rebalanced portfolio, for a set of weights that can only be determined in
hindsight.

We conjecture that for typical elections, the optimal set of weights
concentrates on a single requirement, and perhaps sometimes across a small
number of requirements (when some base test supermartingales frequently swap
leadership positions).

It would be interesting to explore this conjecture by approximating the optimal
constant rebalanced weighting scheme using some kind of optimisation algorithm
with the full set of ballots.  If the conjecture is true, then it would explain
why Largest and similar schemes often performed well in our comparisons.  In
elections where the conjecture is false, it would be worth exploring some more
complex schemes.

The class of $F$-weighted portfolio algorithms is natural to consider based on
asymptotic results, although we note that their short-run performance and
computational complexity are typically poor.

The Linear scheme is in fact an $F$-weighted portfolio algorithm, for a rather
restrictive choice of the distribution $F$; see
\autoref{thm:linear-is-F-weighted} below.  Most of the other schemes, including
Largest, are not in that class because they can change zero weights to non-zero
weights over time (not possible with an $F$-weighted algorithm).  However,
these schemes might be able to be approximated by an $F$-weighted algorithm, by
``rounding off'' weights that are very close to 0 or 1.

This suggests that we could work with more complex $F$-weighted portfolio
algorithms if we approximate them appropriately.  For example, consider the
``general universal portfolio'' by Cover \cite{cover1991}, which creates an $F$
that places positive mass on every face of a simplex.  We could mimic this in a
more heuristic and computationally efficient way by greedily grouping only the
best base martingales together and optimising the weight amongst them with a
general $F$.  Such a calculation would require applying possible weights
(within the group) across the full history every time the weights are updated,
which is more demanding than our current schemes, but it might be feasible for
a small group of requirements.

\begin{theorem}
\label{thm:linear-is-F-weighted}
Linear is an $F$-weighted portfolio algorithm.
\end{theorem}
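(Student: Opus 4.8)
The plan is to exhibit an explicit probability measure $F$ on the simplex
$\Delta = \{\, b \in \mathbb{R}^{r_i} : b_r \geqslant 0,\ \sum_{r} b_r = 1 \,\}$
for which Cover's $F$-weighted (universal) portfolio \cite{cover1991}, applied with the base test supermartingales playing the role of ``stocks,'' reproduces exactly the Linear weights. First I would set up the dictionary between the two settings: the ``price relative'' (gross return) of stock $r$ in period $t$ is the nonnegative increment $m_{r,t}$, so the wealth of the constant rebalanced portfolio $b \in \Delta$ after $\ell$ periods is $S_\ell(b) = \prod_{t=1}^{\ell} \sum_{r=1}^{r_i} b_r m_{r,t}$, with $S_0(b) = 1$; the $F$-weighted portfolio then prescribes, at step $t$, the weights
\[
  b_{r,t} \;=\; \frac{\int_{\Delta} b_r\, S_{t-1}(b)\, dF(b)}{\int_{\Delta} S_{t-1}(b)\, dF(b)}.
\]
I would also record that the Linear scheme, after normalising so that the weights sum to one, assigns $b_{r,t} = M_{r,t-1} / \sum_{r'} M_{r',t-1}$, and that only the normalised weights matter, since the intersection test supermartingale $M_\ell$ depends on $\{w_{r,t}\}$ only through $w_{r,t}/\sum_{r'} w_{r',t}$.

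The key observation is that evaluating a constant rebalanced portfolio at a vertex $\mathbf{e}_r$ of $\Delta$ yields $S_\ell(\mathbf{e}_r) = \prod_{t=1}^{\ell} m_{r,t} = M_{r,\ell}$ (using $m_{r,0} = 1$), i.e.\ the $r$th base test supermartingale itself. So I would take $F$ to be uniform over the $r_i$ vertices of the simplex, $F = \tfrac{1}{r_i}\sum_{r=1}^{r_i} \delta_{\mathbf{e}_r}$ --- the ``rather restrictive'' choice alluded to in the preceding discussion. Substituting this $F$ into the display, the integrals collapse to finite sums: the $r$th coordinate of the numerator is $\tfrac{1}{r_i} S_{t-1}(\mathbf{e}_r) = \tfrac{1}{r_i} M_{r,t-1}$ and the denominator is $\tfrac{1}{r_i}\sum_{r'} M_{r',t-1}$, so the $F$-weighted portfolio plays $b_{r,t} = M_{r,t-1}/\sum_{r'} M_{r',t-1}$, which is precisely the normalised Linear weight. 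A one-line check of the base case $t = 1$ --- where $S_0 \equiv 1$ forces $b_{r,1} = \int b_r\, dF = 1/r_i$, matching $w_{r,1} = M_{r,0} = 1$ --- completes the argument, and since the two schemes agree on the normalised weights at every step, they induce identical intersection test supermartingales.

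The only genuine subtlety, and the step I would be most careful to state precisely, is reconciling this degenerate measure with the definition of ``$F$-weighted portfolio'' in use: some treatments tacitly assume $F$ has a density or full support on $\Delta$. I would therefore note up front that we adopt Cover's general formulation, which admits an arbitrary finite measure $F$ on $\Delta$ (the paper already observes that Cover's ``general universal portfolio'' spreads mass over all faces of the simplex, and the vertices are simply its $0$-dimensional faces), and that with $F$ supported on the vertices the defining ratio is manifestly well defined and positive whenever the $M_{r,t-1}$ are. A secondary, purely bookkeeping point is the normalisation convention: AWAIRE does not require $\sum_r w_{r,t} = 1$, but because $M_\ell$ is invariant to rescaling the weights, it suffices to match the normalised weights $b_{r,t}$, which is exactly what the computation above delivers.
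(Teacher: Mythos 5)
Your proposal is correct and follows essentially the same route as the paper's own proof: take $F$ to be the uniform mixture of point masses at the vertices of the simplex, observe that the constant rebalanced portfolio at a vertex reproduces the corresponding base test supermartingale, and collapse the $F$-weighted integral to the Linear weights. Your additional remarks on normalisation invariance and on admitting a degenerate (purely atomic) $F$ are sensible points of care but do not change the argument.
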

\begin{proof}
Consider a set of requirements $R_i$.  Let $\vec{b} := (b_1, b_2, \dots,
b_{r_i})$ be a vector of nonnegative weights for the base test
supermartingales for the $r_i$ requirements in this set.  Let
$M_t(\vec{b})$ be the intersection test supermartingale obtained using the
weight vector $\vec{b}$ at each time step (a constant rebalanced portfolio).

An \emph{$F$-weighted portfolio} updates the weights at each time step using a
performance-weighted average of constant rebalanced portfolios and an initial
distribution $F$ across possible weight vectors:
\[
  \vec{b}_t = \frac{\int \vec{b} M_{t-1}(\vec{b}) F(\vec{b}) \, d\vec{b}}
                   {\int         M_{t-1}(\vec{b}) F(\vec{b}) \, d\vec{b}} .
\]
Let $\vec{b}_r = (0, 0, \dots, 1, \dots, 0)$, consisting of a weight 1 for the
$r$th component and 0 for all others.  Using these weights yields the base
test supermartingale for requirement~$r$.  In other words, $M_t(\vec{b}_r)
= M_{r, t}$.

Consider a distribution $F$ that places mass on all vectors $\vec{b}_r$, and
zero probability elsewhere: $F(\vec{b}) =  1/r_i \sum_{r=1}^{r_i}
\delta_{\vec{b}_r}(\vec{b})$, where $\delta$ is the Dirac delta function.  We
show that this produces the Linear weighting scheme:
\[
  \vec{b}_t = \frac{\int \vec{b} M_{t-1}(\vec{b}) 1/r_i \sum_{r=1}^{r_i} \delta_{\vec{b}_r}(\vec{b}) \, d\vec{b}}
                   {\int         M_{t-1}(\vec{b}) 1/r_i \sum_{r=1}^{r_i} \delta_{\vec{b}_r}(\vec{b}) \, d\vec{b}}
            = \frac{\sum_{r=1}^{r_i} \vec{b}_r M_{t-1}(\vec{b}_r)}
                   {\sum_{r=1}^{r_i}           M_{t-1}(\vec{b}_r)}
            = \frac{\sum_{r=1}^{r_i} \vec{b}_r M_{r, t-1}}
                   {\sum_{r=1}^{r_i}           M_{r, t-1}} .
\]
This is precisely the weight vector for the Linear scheme
($w_{r,t} := M_{r,t-1}$).
\qed
\end{proof}


\section{Discussion}

AWAIRE has many adjustable parameters including tuning parameters in the base
ALPHA test supermartingales and the adaptive selection of weights in combining
the base test supermartingales.  We explored an extensive range of weighting
schemes and of tuning parameters for \texttt{shrinkTrunc()} in ALPHA, providing
a deeper understanding of the trade-offs.  We provided recommendations for
default choices of the parameters in \texttt{shrinkTrunc()} for ALPHA and the
adaptive weights.

This work focused on auditing IRV contests when the election reports a winner
but does not report the interpretation of individual ballot cards (CVRs).
\cite{ek2023awaire} shows that reliable CVRs, if they are available, can make
AWAIRE more efficient.   In some jurisdictions, CVRs are not available but some
information about the election count is, such as round-by-round vote tallies.
It might be possible to use such tallies to tune AWAIRE parameters.  For
example, the last-round margin is often the margin of the contest as a whole,
or at least provides an upper bound.  This could be used to set $\eta_0$ to a
useful default value specific for that contest, rather than simply using our
default choice.

For any specific \altorder{}, the requirements will have a range of assorter
margins, each with a different optimal tuning for ALPHA.  Absent any
information (such as CVRs) to tune the tests individually, we proposed a
default value of $\eta_0$ to use for all requirements.  Large values of $d$
will make ALPHA adapt very slowly to the data, which will be helpful for some
requirements but reduce efficiency for others, as illustrated in
\autoref{fig:new-vs-old}.

Our work has useful implications for a ``lazy'' implementation of AWAIRE that
decreases the computational burden.  Essentially, only schemes that have sparse
weights (such as Largest) are feasible.  The fact that Largest and its variants
were among the best schemes suggests that a lazy implementation should not
incur a large penalty in statistical performance.  The major challenge will be
to ensure that a good requirement is found early on in any lazy algorithm, but
once that is done the audit should perform competitively without needing to
explore for more requirements.

It was difficult to find many practically useful software implementations of
existing portfolio algorithms.  That limited how many we could include in our
comparison.  However, many portfolio algorithms are known to be either
computationally inefficient, or only asymptotically optimal but perform poorly
for small time horizons; thus, they would not fare well in our comparisons
anyway.  It may be the case that some existing algorithms would perform better
than all of the ones we have tried thus far.  Future work can explore
implementing any algorithms that are expected to be computationally efficient
and also perform well on short time horizons.

There may be a theoretically best weighting scheme that could be determined
from the complete set of ballots (i.e., ``in hindsight'').  Future work could
investigate optimal weighting and ways to approximate it efficiently and
adaptively in practice.

It would be interesting to see to what extent the theoretically best schemes
place nearly all of their weight on only a few requirements.  We suspect this
might be the case, given how well the Largest scheme performs in our
comparisons.


\subsubsection{Acknowledgements.}

We thank Michelle Blom and Vanessa Teague for helpful discussions and
suggestions.
This work was supported by the Australian Research Council
(Discovery Project DP220101012, OPTIMA ITTC IC200100009)
and the U.S.\ National Science Foundation (SaTC~2228884).


\bibliographystyle{splncs04}
\bibliography{references}


\end{document}